\tikzset{every state/.style={minimum size=40pt}}
\title{Corrections to \emph{A Menagerie of Timed Automata}}
\author[1,2]{Jeroen J.A. Keiren}
\affil[1]{Open University in The Netherlands}
\affil[2]{Radboud University, Nijmegen, The Netherlands}
\author[3]{Peter Fontana}
\author[3]{Rance Cleaveland}
\affil[3]{University of Maryland, USA}
\date{}
\newcommand{\CX}{\mathit{CX}}
\newcommand{\TA}{\mathit{TA}}
\newcommand{\TS}{\mathit{TS}}
\newcommand{\INV}{\mathit{INV}}
\newcommand{\URG}{\mathit{URG}}
\newcommand{\PUR}{\mathit{PUR}}
\begin{document}

\maketitle

\begin{abstract}
This note corrects a technical error in the ACM Computing Surveys paper mentioned in the title.  The flaw involved constructions for showing that timed automata with urgent locations have the same expressiveness as timed automata that allow false location invariants.  Corrected constructions are presented in this note, and the affected results are reproved.
\end{abstract}

\section{Introduction}

This note corrects a technical shortcoming in the \emph{ACM Computing Surveys} paper \emph{A Menagerie of Timed Automata} published 3 January 2014 \cite{fontana-a-menagerie-2014} (DOI: \url{http://dx.doi.org/10.1145/2518102}). This note often refers to that paper for details, to avoid repeating a large part of it. It is therefore advisable to have a copy of \cite{fontana-a-menagerie-2014} for reference.

That paper developed a unified framework for so-called \emph{timed automata}, which extend traditional finite-state machines with real-valued clock variables.  The states, or \emph{locations}, in these timed automata are equipped with \emph{location invariants} describing a property that must hold of the clock variables in order for control to remain within the given location.  Some accounts of timed automata do not allow control to change into locations whose invariants are false; others permit this behavior, in which case time is not permitted to advance until control exits from the location.

In the baseline version of timed automata considered in the original paper, transitions were not allowed into states whose location invariants would be violated by such a transition.  However, states whose invariants were violated were allowed to engage in action transitions.  Such states were not reachable from initial states, with the following exception:  initial states themselves were allowed to have invariant violations in them.  Specifically, if $l_0 \in L_0$ and $\nu_0 \not\models I(l_0)$, $(l_0,\nu_0)$ was nevertheless allowed to be an initial state.

This decision makes some of the semantic conversions contained in the paper (and this note) easier, but it also may be viewed as being inconsistent with our treatment of invariants in non-initial states.  In particular, one might wish for the following to be true in each reachable state $(l,\nu)$ in a timed automaton $\TA$:  $\nu \models I(l)$.  This can fail to hold for initial states in $\TA$, as illustrated in the following example.
\begin{example}\label{ex:basicta}
Consider the following timed automaton, in which invariant $x > 1$ in the initial location $l_0$ is not satisfied by the clock valuation $[x:=0]$ assigning $0$ to the only clock $x$.
\begin{center}
  \begin{tikzpicture}[>=stealth',draw,node distance=60pt,initial text=]
  \scriptsize
  \node [state,initial left] (l0) {$\begin{array}{c}l_0\\x>1\end{array}$};
  \node [state,right of=l0] (l1) {$\begin{array}{c}l_1\\\lgtrue{}\end{array}$};
  \path[->] (l0) edge node[above] {$a$} (l1);
  \end{tikzpicture}
\end{center}
According to the baseline semantics, only states are entered in which the invariant is satisfied.  However, it does not disallow the situation in which the invariant is violated initially.  Hence, the baseline semantics allows the executions $(l_0, [x := 0]) \ttrans{a} (l_1, [x := 0]) \ttrans{\delta} (l_1, [x := \delta])$ for all $\delta \in \Rr^{\geq 0}$, even though initially $I(l_0)$ is not satisfied.  $\blacksquare$
%%Finally, observe that the underlying timed transition system for this example is identical when interpreted in the unsatisfied invariants semantics contained in the paper.
\end{example}
%%Note that in the example above, and in other examples in this note, we do not show transitions with guard $\lgfalse{}$, and guards $\lgtrue{}$ are now shown explicitly.

In two places in the original paper, an implicit assumption was made that when a location invariant is violated in a starting location, no behavior is possible in that location.  As illustrated in the previous example, this assumption is at odds with assumptions made elsewhere in that paper.  As a result, two of the semantic transformations given in the paper do not correctly handle invariant violations in initial locations.  This note explains how the transformations may be modified so initial invariant violations are handled consistently.  The parts of the paper that this corrigendum addresses involve Section 5.1, where transformations in question are defined, and associated appendices, where proofs are given.
%In one place in the original paper, an implicit assumption was made that when a location invariant is violated in a starting location, no behavior is possible in that location.  This assumption is at odds with assumptions made elsewhere in that paper.  As a result, two of the semantic transformations given in the paper do not correctly handle invariant violations in initial locations.  This note explains how the transformations may be modified so initial invariant violations are handled consistently.  The parts of the paper that this corrigendum addresses involve Section 5.1, where transformations in question are defined, and associated appendices, where proofs are given.  

To facilitate the description of the timed-automaton conversions below, we recall \cite[Definitions 3.1 (clock constraints) and 3.2 (timed automaton)]{fontana-a-menagerie-2014}.

\setcounter{section}{3}  %% Use this to trick definition into giving numbers that match those in the original paper.

\begin{definition}[Clock constraint $\phi \in \Phi(\CX)$ from  \cite{fontana-a-menagerie-2014}] Given a nonempty finite set of clocks $\CX = \mset{x_1, x_2, \ldots, x_n}$ and $c \in \Zz^{\geq 0}$ (a non-negative integer), a \emph{clock constraint $\phi$} may be constructed using the following grammar:
\begin{equation*}
\phi::= x_i < c \ | \ x_i \leq c \ | \ x_i > c \ | \ x_i \geq c \ | \ \phi \lgcand \phi
\end{equation*}
$\Phi(\CX)$ is the set of all possible clock constraints over $\CX$. We also use the following abbreviations: true (\lgtrue) for $x_1 \geq 0$, false (\lgfalse) for $x_1 < 0$, and $x_i = c$ for $x_i \leq c \lgcand x_i \geq c$. 
\label{def:cxcons}
\end{definition}

\begin{definition}[Timed automaton from  \cite{fontana-a-menagerie-2014}]
A \emph{timed automaton} $TA = (L, L_0, L_u, \Sigma, \CX, I, E)$ is a tuple where:
\begin{compactitem}
\item $L$ is the finite set of \emph{locations}. 
\item $L_0 \subseteq L$ is the nonempty set of \emph{initial locations}.
\item $L_u \subseteq L$ is the set of \emph{urgent locations}.
\item $\Sigma$ is the finite set of \emph{action symbols}. 
\item $\CX$ is the nonempty finite set of \emph{clocks} ($\CX = \mset{x_1, x_2, \ldots, x_n}$).
\item $\nmfunc{I}{L}{\Phi(\CX)}$ gives a clock constraint for each location $l$.  $I(l)$ is referred to as the \emph{invariant} of $l$.
\item $E \subseteq L \times \Sigma \times \Phi(\CX) \times 2^{\CX} \times L$ is the set of \emph{edges}.  In an edge $e = (l, a, \phi, \lambda, l')$ from $l$ to $l'$ with action $a$, $\phi \in \Phi(\CX)$ is the \emph{guard} of $e$, and $\lambda \in 2^{\CX}$ represents the set of clocks to \emph{reset} to $0$ when the edge is executed. 
\end{compactitem}
\label{def:timedaut}
\end{definition}

\setcounter{section}{1}  %% Now set counter back to its former value.

One assumption made in that paper, and in others involving timed automata, is that $\Sigma \cap \Rr^{\geq 0} = \emptyset$; in other words, $\Sigma$ does not include any non-negative real numbers, which are reserved for use in the semantics of these automata.

In \cite{fontana-a-menagerie-2014}, timed automata are given a baseline semantics in the form of a translation function that maps a timed automaton $\TA = (L, L_0, L_u, \Sigma, \CX, I, E)$ to a timed transition system $\TS(\TA) = (Q,Q_0, \Delta(\Sigma),\longrightarrow)$, where the set of states $Q$ consists of pairs of automaton locations and clock assignments (i.e.\/ mappings of clocks to non-negative real numbers), $Q_0 \subseteq Q$ is the set of initial states, $\Delta(\Sigma) = \Sigma \cup \Rr^{\geq 0}$ is the set of transition labels (actions or time elapses), and $\ttrans{} \,\subseteq Q \times \Delta(\Sigma) \times Q$ is the transition relation.\footnote{The original paper does not introduce the notation $\Delta(\Sigma)$; we do so here for improved clarity.}   The details of this construction may be found in \cite[Definition 3.7]{fontana-a-menagerie-2014}, and forbids transitions into states $(l, \nu)$ where $\nu \not\models I(l)$ (i.e.\/ the clock assignment violates the location invariant of location $l$. Satisfaction of clock valuations, $\models$, is made precise in the usual fashion).

In \cite[Section 5.1]{fontana-a-menagerie-2014}, a semantic variant of timed automata is considered that weakens the restriction on transitions into transition-system states $(l,\nu)$ for which $\nu \not\models I(l)$.  Specifically, the new semantics associates a transition system $\TS'(\TA) = (Q,Q_0,\Delta(\Sigma),\ttrans{})$ with $\TA$, where $Q, Q_0$ and $\Delta(\Sigma)$ retain the definitions above and $\ttrans{}$ is redefined as specified in the lower part of \cite[page 20 in Section 5.1]{fontana-a-menagerie-2014}.  (The notation $TS'$ is not used in the paper, but is introduced here to simplify the presentation.)

Two transformations are then given in \cite[Section 5.1]{fontana-a-menagerie-2014}, $\INV$ and $\URG$, that are intended to have the following properties.  Given a timed automaton $\TA = (L, L_0, \emptyset, \Sigma, \CX, I, E)$ with an empty set of urgent locations, $\INV(\TA)$ has the property that $\TS'(\TA)$ and $\TS(\INV(\TA))$ are semantically indistinguishable, in a precisely defined sense \cite[Theorem 5.5]{fontana-a-menagerie-2014}.  That is, $\TA$ interpreted in a semantics in which action transitions are allowed in states with location-invariant violations is equivalent to $\INV(\TA)$ interpreted in our baseline semantics.  Similarly, given a baseline timed automaton $\TA$, $\URG$ has the property that $\TS(\TA)$ and $\TS'(\URG(\TA))$ are appropriately related \cite[Theorem 5.6]{fontana-a-menagerie-2014}.

The constructions $\INV$ and $\URG$ are the ones that this note redefines to eliminate the issues with violated invariants.  The modified conversions are given, and \cite[Theorems 5.5 and 5.6]{fontana-a-menagerie-2014} reproved.  The note then concludes with a new construction showing how violated invariants may be eliminated entirely from the baseline formalism.

\section{Conversion $\INV$ (to Baseline Version)}
\label{s:invconv1}

Before we continue defining the translation $\INV$ we first illustrate the problem with the translation in the original paper.
\begin{example}\label{ex:invbug}
The timed automaton from Example~\ref{ex:basicta} is translated into the following timed automaton using the original translation $\INV$.  Locations $l'_{0,u}$ and $l'_{1,u}$ are urgent.
\begin{center}
  \begin{tikzpicture}[>=stealth',draw,node distance=60pt,initial text=]
  \scriptsize
  \node [state,initial left] (l'd) {$\begin{array}{c}l'_d\\\lgfalse{}\end{array}$};
  \node [state,right of=l'd] (l0) {$\begin{array}{c}l_0\\x>1\end{array}$};
  \node [state,right of=l0] (l1) {$\begin{array}{c}l_1\\\lgtrue{}\end{array}$};
  \node [state,right of=l1] (l'0u) {$\begin{array}{c}l'_{0,u}\\\lgtrue{}\end{array}$};
  \node [state,right of=l'0u] (l'1u) {$\begin{array}{c}l'_{1,u}\\\lgtrue{}\end{array}$};
  \path[->] (l0) edge node[above] {$a$} (l1)
            (l'0u) edge node[above] {$a$} (l1);
  
  \end{tikzpicture}
\end{center}
The timed transition system underlying this automaton, using the baseline semantics, does not allow any transitions from the initial state, whereas the timed automaton from Example~\ref{ex:basicta}, interpreted in the unsatisfied invariants semantics, allows an $a$ transition from the initial state.  Therefore, the translation does not preserve the semantics of the original timed automaton. $\blacksquare$
\end{example}
The original version of $\INV$ incorrectly introduced dead locations $l_d$ for initial locations whose invariants are not satisfied by the initial clock valuation.  To correct the definition,
let $TA = (L, L_0, \emptyset, \Sigma, \CX, I, E)$ be a timed automaton with an empty set of urgent locations, and let $L_u = \mset{l_u \ | \ l \in L}$ be a fresh set of locations with the property that $L_u \cap L = \emptyset$ and $l_u \neq l'_u$ if $l \neq l'$.   Also let $\nu_0$ be the clock valuation assigning $0$ to every clock in $\CX$.  Finally, we recall \cite[Definition 5.2]{fontana-a-menagerie-2014} from the original paper, which introduces of $\texttt{resetPred}(\phi, \lambda)$, where $\phi$ is a clock constraint and $\lambda \subseteq \CX$ is a set of clocks to be reset.  The constraint $\texttt{resetPred}(\phi, \lambda)$ may be viewed as the weakest precondition of $\phi$ with respect to the simultaneous assignment of each clock in $\lambda$ to $0$; it is the weakest property $\phi'$ such that if $\nu \models \phi'$, then $\nu[\lambda := 0] \models \phi$.
We now redefine 
$\INV(\TA) = (L', L'_0, L'_u, \Sigma, \CX, I', E')$ as follows.
	\begin{compactitem}
		\item $L' = L \cup L_u$.
		\item $L'_0 = \mset{l \in L_0 \ | \ \nu_0 \models I(l)} \cup \mset{l_u \in L_u \ | \ l \in L_0 \wedge \nu_0 \not\models I(l)}$. 
		\item $L'_u = L_u$.
		\item $I'(l') =
		\left\{
		\begin{array}{lp{1.5in}}
		I(l')		& if $l' \in L$\\
		\lgtrue		& otherwise (i.e.\/ $l' \in L_u$)
		\end{array}
		\right.
		$
		\item For each edge $(l,a,\phi,\lambda, l') \in E$,  $E'$ includes the following four edges, where $\phi_1 = \phi \wedge \texttt{resetPred}(I(l'),\lambda)$ and $\phi_2 = \phi \wedge \neg \texttt{resetPred}(I(l'), \lambda)$.
\begin{align*}
(l, a, \phi_1, \lambda, l'), \\
(l, a, \phi_2, \lambda, l'_{u}),\\
(l_{u}, a, \phi_1, \lambda, l'),\\
(l_{u}, a, \phi_2, \lambda, l'_{u}).
\end{align*}
Disjunctive guard constraints may arise from negating \texttt{resetPred($I(l'), \lambda$)}. Following the process used in \cite[Section 4.1]{fontana-a-menagerie-2014} of the original paper, any disjunctive guard constraint is eliminated by converting the edge with such a constraint to a set of edges. 
%%Any edge with a guard constraint logically equivalent to $\lgfalse{}$ is removed from $E'$. 
\end{compactitem}

The key difference in the redefinition of $\INV$ involves $L'_0$.  In the original construction, $L_0'$ was incorrectly taken to include a set of dead locations $L_d$ to represent those initial locations whose invariants were violated by the initial clock assignment $\nu_0$.
In the new construction, initial locations $l \in L_0$ that are violated by the initial clock assignment $\nu_0$ are replaced in $L'_0$ by their urgent versions $l_u$.
\begin{example}\label{ex:invfixed}
The timed automaton from Example~\ref{ex:basicta}, interpreted in the unsatisfied invariants semantics, is translated into the following timed automaton in the baseline semantics using the fixed translation $\INV$.  Locations $l'_{0,u}$ and $l'_{1,u}$ are urgent.
\begin{center}
  \begin{tikzpicture}[>=stealth',draw,node distance=60pt,initial text=]
  \scriptsize
  \node [state,initial left] (l'0u) {$\begin{array}{c}l'_{0,u}\\\lgtrue{}\end{array}$};
  \node [state,right of=l'0u] (l1) {$\begin{array}{c}l_1\\\lgtrue{}\end{array}$};
  \node [state,right of=l1] (l0) {$\begin{array}{c}l_0\\x>1\end{array}$};
  \node [state,right of=l0] (l'1u) {$\begin{array}{c}l'_{1,u}\\\lgtrue{}\end{array}$};
  \path[->] (l0) edge node[above] {$a$} (l1)
            (l'0u) edge node[above] {$a$} (l1);
  
  \end{tikzpicture}
\end{center}
It is not hard to see that the underlying timed transition system in the baseline semantics, when restricted to reachable states, is the same as that of the original timed automaton in the unsatisfied invariants semantics. $\blacksquare$
\end{example}
We now state and prove \cite[Theorem 5.5]{fontana-a-menagerie-2014} from the original paper.

%% Save old section counter, then "trick" latex into generating correct theorem number.
\newcounter{oldsection}
\setcounter{oldsection}{\value{section}}
\setcounter{section}{5}
\setcounter{theorem}{4}
\begin{theorem}
Let $\TA = (L, L_0, \emptyset, \Sigma, \CX, I, E)$ be a timed automaton with an empty set of urgent locations.  Then the reachable subsystems of $\TS'(\TA)$ and $\TS(\INV(\TA))$ are isomorphic, \emph{i.e.} $\TS'(\TA)) \cong_{r} \TS(\INV(\TA))$. 
\end{theorem}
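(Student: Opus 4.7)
\medskip

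\noindent\textbf{Proof plan.}

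The plan is to construct an explicit bijection $f$ between the reachable states of $\TS'(\TA)$ and the reachable states of $\TS(\INV(\TA))$ and show that it preserves and reflects transitions. Define
\[
f(l, \nu) =
\begin{cases}
(l, \nu) & \text{if } \nu \models I(l),\\
(l_u, \nu) & \text{if } \nu \not\models I(l).
\end{cases}
\]
Intuitively, the urgent copies $L_u$ of $\INV(\TA)$ represent exactly those pairs $(l,\nu)$ in $\TS'(\TA)$ where the invariant of $l$ is violated; because $L_u$ is declared urgent and carries the invariant \lgtrue, such copies retain the ``no time may pass'' character that is forced in $\TS'(\TA)$ whenever $\nu \not\models I(l)$.

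I would first check that $f$ restricted to initial states is a bijection $Q_0 \to Q'_0$. For $l \in L_0$ with $\nu_0 \models I(l)$, the definition of $L'_0$ places $l$ in $L'_0$; for $l \in L_0$ with $\nu_0 \not\models I(l)$, it instead places $l_u$ in $L'_0$. So $f(l,\nu_0) \in Q'_0$ in either case, and every element of $Q'_0$ has a unique preimage. Next I would verify preservation and reflection of transitions by a case split on the label.

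For a delay transition $(l,\nu) \ttrans{\delta} (l,\nu+\delta)$ in $\TS'(\TA)$, the definition of the weakened semantics still requires $\nu + \delta' \models I(l)$ for all $0 \le \delta' \le \delta$ (only action transitions into invariant-violating states are permitted). In particular $\nu \models I(l)$, so $f(l,\nu) = (l,\nu)$ with $l \in L' \setminus L'_u$ and $I'(l) = I(l)$, so the same delay is available in $\TS(\INV(\TA))$, landing in $f(l,\nu+\delta) = (l,\nu+\delta)$. Conversely, any delay in $\TS(\INV(\TA))$ must originate in a non-urgent location, i.e.\ in $L$ and not $L_u$, so it comes from an $f$-image of the first form and lifts back to a delay in $\TS'(\TA)$. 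For an action transition $(l,\nu) \ttrans{a} (l',\nu')$ in $\TS'(\TA)$ arising from an edge $e = (l,a,\phi,\lambda,l') \in E$, I would split on whether $\nu' = \nu[\lambda:=0] \models I(l')$. If so, the edge $(l,a,\phi_1,\lambda,l')$ (or its $l_u$ variant if $\nu \not\models I(l)$) is enabled in $\INV(\TA)$ and produces $(l',\nu')$; if not, the edge $(l,a,\phi_2,\lambda,l'_u)$ (or its $l_u$ variant) is enabled and produces $(l'_u,\nu')$. Either way the result matches $f(l',\nu')$. The reverse direction uses the fact that every edge of $\INV(\TA)$ was synthesised from a unique edge of $\TA$, together with the semantic meaning of $\texttt{resetPred}$ that $\nu \models \texttt{resetPred}(I(l'),\lambda)$ iff $\nu[\lambda:=0] \models I(l')$; the enabledness conditions in $\INV(\TA)$ collapse to exactly $\nu \models \phi$, which is precisely the firing condition for $e$ in the weakened semantics.

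With these two preservation/reflection properties in hand, a straightforward induction on the length of a computation shows that $f$ restricted to the reachable states is a well-defined bijection between the reachable subsystems, and hence an isomorphism. The main obstacle is the action-transition case: one has to track four possible combinations (source urgent or not, target urgent or not) and verify that the partition induced by $\phi_1$ vs.\ $\phi_2$ together with the splitting of disjunctive guards into multiple edges exactly recovers the single weakened-semantics transition of $\TA$. This is a bookkeeping exercise, made clean by the observation that $\phi_1$ and $\phi_2$ are mutually exclusive and jointly cover $\phi$, so for every firing of $e$ in $\TS'(\TA)$ exactly one of the four new edges fires in $\TS(\INV(\TA))$, and the target is correctly steered to the $L$-copy or the $L_u$-copy according to the invariant status of $\nu[\lambda:=0]$.
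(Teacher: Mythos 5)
Your proposal is correct and follows essentially the same route as the paper's own proof: the identical map $f$ sending $(l,\nu)$ to $(l,\nu)$ or $(l_u,\nu)$ according to whether $\nu \models I(l)$, the same treatment of initial states, the same delay/action case split with the four source/target invariant combinations resolved by the $\phi_1$/$\phi_2$ edges, and the same concluding induction to restrict $f$ to the reachable subsystems.
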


%% Now restore old section number
\setcounter{section}{\value{oldsection}}

\begin{proof}[Proof of Theorem 5.5]  

Given transitition system $T = (Q,Q_0,\Delta(\Sigma),\ttrans{})$, define the reachable state space of $T$, $R(T) \subseteq Q$,  to be the smallest subset of $Q$ satisfying the following:
\begin{compactitem}
\item $Q_0 \subseteq R(T)$;
\item if $q \in R(T)$ and $q \ttrans{\alpha} q'$ for some $\alpha \in \Delta(\Sigma)$ then $q' \in R(T)$.
\end{compactitem}
In what follows, we sometimes abuse notation and write $R(T)$ for the transition system $(R(T), Q_0, \Delta(\Sigma), \ttrans{})$.

Let $\TS'(\TA) = (Q_1, Q_{0,1}, \Delta(\Sigma), \ttrans{}_1)$, and $\TS(\INV(\TA)) = (Q_2, Q_{0,2}, \Delta(\Sigma),$ $\ttrans{}_2)$.  To prove the theorem we must give an isormorphism $f$ from $R(\TS'(\TA))$ to $R(\TS(\INV(\TA)))$; specifically $\nmfunc{f}{R(\TS'(\TA))}{R(\TS(\INV(\TA)))}$ must have the following properties.
\begin{compactenum}
\item $f$ is one-to-one.
\item $f(Q_{0,1}) = Q_{0,2}$.
\item For every $q, q' \in R(\TS'(\TA))$ and $\alpha \in \Delta(\Sigma)$, $q \ttrans{\alpha}_1 q'$ iff $f(q) \ttrans{\alpha}_2 f(q')$.
\end{compactenum}
We define $\nmfunc{f}{Q_1}{Q_2}$ as follows, then show it is an isomorphism from $R(\TS'(\TA))$ to $R(\TS(\INV(\TA)))$.  So consider
$$
f((l,\nu)) =
\begin{cases}
(l, \nu), &\text{ if } \nu \models I(l) \\
(l_u, \nu) &\text{ otherwise.}
\end{cases}
$$
We now prove that $f$ has the necessary properties.

\paragraph{$f$ is one-to-one.}  Suppose $f((l,\nu)) = f((l',\nu')) = (l'',\nu'')$; we must show that $l = l'$ and $\nu = \nu'$.  From the definition of $f$ it follows that if $f((l,\nu)) = (l'',\nu'')$ then $\nu = \nu''$; hence, $\nu = \nu' = \nu''$.  Moreover, if $\nu \models I(l)$ then $l'' = l$, whence it must be the case in $l = l'$.  Finally, if $\nu \not\models I(l)$ then $l'' = l_u$, and again it follows that $l = l'$.

\paragraph{$f(Q_{0,1}) = Q_{0,2}$.}  Suppose $(l,\nu) \in Q_{0,1}$; we must show that $f((l,\nu)) = (l',\nu') \in Q_{0,2}$.  First, note that $l \in L_0$ and $\nu = \nu_0$, and thus $\nu' = \nu_0$ by the definition of $f$.  By the definition of $\INV$, it also follows that $l' = l$ if $\nu_0 \models I(l)$ and $l' = l_u$ otherwise.  In either case, $l' \in L_0'$ and $(l',\nu') \in Q_{0,2}$.  Now suppose $(l',\nu') \in Q_{0,2}$; we must give $(l,\nu) \in Q_{0,1}$ such that $f((l,\nu)) = (l',\nu')$.  As before, $\nu' = \nu_0$ by definition of $Q_{0,2}$.  Now either $l' \in L_0$, meaning $\nu_0 \models I(l')$ and thus $f((l',\nu')) = (l',\nu)$, or $l' = l_u$ for some $l \in L_0$ with $\nu_0 \not\models I(l)$; in this case, $f((l,\nu_0)) = (l',\nu')$, with $(l,\nu_0) \in Q_{0,1}$.

\paragraph{$(l,\nu) \ttrans{\alpha}_1 (l',\nu')$ iff $f((l,\nu)) \ttrans{\alpha}_2 f((l',\nu'))$.}  There are two cases to consider:  $\alpha = \delta$ for some $\delta \in \Rr^{\geq 0}$, or $\alpha \in \Sigma$.  So suppose $\alpha = \delta$ for some $\delta \geq 0$.  Now, $(l,\nu) \ttrans{\delta}_1 (l',\nu')$ iff $l = l'$, $\nu' = \nu + \delta$, and for all $k$ such that $0 \leq k \leq \delta$, $\nu+k \models I(l)$.  This in turn holds iff $l = l'$,   $f((l,\nu)) = (l,\nu)$, $f((l',\nu')) = (l',\nu')$, and $f((l,\nu)) \ttrans{\delta}_2 f((l',\nu'))$.

Now consider the case where $\alpha = a \in \Sigma$, and suppose $(l,\nu) \ttrans{a}_1 (l',\nu')$ is an action transition, meaning there is an edge $e = (l,a,\phi,\lambda,l') \in E$ such that $\nu \models \phi$ and $\nu' = \nu[\lambda := 0]$.  We must show that $f((l,\nu)) \ttrans{a}_2 f((l',\nu'))$.  There are four cases to consider.
\begin{compactitem}
\item $\nu \models I(l)$ and $\nu' \models I(l')$.
In this case, $f((l,\nu)) = (l,\nu)$, $f((l',\nu')) = (l',\nu')$, and $\nu \models \phi_1$ as defined in $\INV$.  By our conversion, we have the edge $e = (l, a, \phi \cap \texttt{resetPred($I(l'), \lambda$)}, \lambda, l')$ in $\INV(\TA)$ and, $f((l', \nu[\lambda := 0])) = (l', \nu[\lambda :=0])$. Since we know $\nu[\lambda := 0] \models I(l')$ and $\nu \models \phi$, by Corollary B.5, we know $\nu \models \phi \cap \texttt{resetPred($I(l'), \lambda$)}$. Therefore, $\INV(\TA)$ has the transition $f((l, \nu)) \ttrans{a} f((l', \nu[\lambda := 0]))$. 

\item $\nu \models I(l)$ and $\nu' \not\models I(l')$.
In this case, $f((l,\nu)) = (l,\nu)$, $f((l',\nu')) = (l'_u,\nu')$, and $\nu \models \phi_2$ as defined in $\INV$.  By our conversion, we use the edge $e_{u} = (l_{u}, a, \phi \cap \texttt{resetPred($I(l'), \lambda$)}, \lambda, l')$ in $\INV(\TA)$. Otherwise, the proof is the same as the previous case's. It therefore follows that $f((l,\nu)) \ttrans{a}_2 f((l',\nu'))$, since the edge $(l,a,\phi_2,\lambda,l_u')$ is in the edge set of $\INV(\TA)$.

\item $\nu \not\models I(l)$ and $\nu' \models I(l')$.
In this case, $f((l,\nu)) = (l_u,\nu)$, $f((l',\nu')) = (l',\nu')$, and $\nu \models \phi_1$ as defined in $\INV$.  By our conversion, $f((l', \nu[\lambda:=0])) = (l'_{u}, \nu[\lambda := 0])$. Since $l'_u$ is the urgent copy of $l'$, we know $\nu[\lambda := 0] \models I(l'_u)$.  Since $\nu \models \phi$, by Corollary B.5, we know that $\nu \models \phi \cap \neg \texttt{resetPred($I(l'), \lambda$)}$. By the definition of the transition system semantics, $\INV(\TA)$ has the transition $f((l, \nu)) \ttrans{a} f(l', \nu[\lambda := 0])$. 

\item $\nu \not\models I(l)$ and $\nu' \not\models I(l')$.
In this case, $f((l,\nu)) = (l_u,\nu)$, $f((l',\nu')) = (l'_u,\nu')$, and $\nu \models \phi_2$ as defined in $\INV$.   By our conversion, we  use the edge $e_{u} = (l_{u}, a, \phi \cap \neg\texttt{resetPred($I(l'), \lambda$)}, \lambda, l_{u}')$ in $\INV(\TA)$. Otherwise, the proof is the same as the previous case's. It therefore follows that $f((l,\nu)) \ttrans{a}_2 f((l',\nu'))$, since the edge $(l_u,a,\phi_2,\lambda,l'_u)$ is in the edge set of $\INV(\TA)$.
\end{compactitem}
For the converse, we assume that $f((l,\nu)) \ttrans{a}_2 f((l',\nu'))$ and must show that $(l,\nu) \ttrans{a}_1 (l',\nu')$.  The argument follows the lines above and relies on a case analysis of which of the four types of edges in $\INV(\TA)$ supports the conclusion that $f((l,\nu)) \ttrans{a}_2 f((l',\nu'))$.  The details are omitted (one may wish to use the definition of $f^{-1}$ when proving the converse).

\paragraph{$(l,\nu) \in R(\TS'(\TA))$ iff $f((l,\nu)) \in R(\TS(\INV(\TA)))$.}
This can be proved by induction on the definition of $R(\cdot)$ and is a consequence of the fact that $f(Q_{0,1}) = Q_{0,2}$ and that $(l,\nu) \ttrans{\alpha}_1 (l',\nu')$ iff $f((l,\nu)) \ttrans{\alpha}_2 f((l',\nu'))$.  Alternatively, to show this, suppose we have a state $(l_{inv}, \nu_{inv})$ in $\INV(\TA)$ that is not mapped to by $f$. We claim  $(l_{inv}, \nu_{inv})$ is not reachable from an initial state. By the definition of $f$, if $l_{inv}$ is not an urgent copy and $\nu_{inv} \models I(l_{inv})$, then $(l_{inv}, \nu_{inv})$ is covered by $f$. Likewise, if $l_{inv}$ is an urgent copy location $l_{u}$ and $(l, \nu_{inv}) \not\models I(l)$, then $(l_{inv}, \nu_{inv})$ is covered by $f$. If $l_{inv}$ is an urgent copy location $l_{u}$ and $(l, \nu_{inv}) \models I(l)$, then by the construction of $\INV(\TA)$, this state is not reachable. If $l_{inv}$ is not an urgent copy and $\nu_{inv} \not\models I(l_{inv})$, then by the semantics of $\INV(\TA)$, $(l_{inv},\nu_{inv})$ is only reachable if and only it is an initial state. Furthermore, by construction, only urgent initial locations do not satisfy their invariant. Thus, applying $f$ to a reachable state in the original timed automaton results in a reachable state in the converted timed automaton.

\paragraph{$f$ is an isomorphism from $R(\TS'(\TA))$ to $R(\TS(\INV(\TA)))$.} 
This conclusion is a consequence of the previous facts.  Since $f$ is one-to-one on $Q_1$, it is one-to-one when restricted to $R(\TS'(\TA)) \subseteq Q_1$.  Moreover, as $(l,\nu) \in R(\TS'(\TA))$ iff $f((l,\nu)) \in R(\TS(\INV(\TA)))$, it follows that $f(R(\TS'(\TA)) = R(\TS(\INV(\TA)))$, meaning $f$ when restricted to $R(\TS'(\TA))$ is onto with respect to $R(\TS(\INV(\TA)))$.  Hence $f$ is a bijection from $R(\TS'(\TA))$ to $R(\TS(\INV(\TA)))$ that preserves start states and the transition relation, and is therefore the required isomorphism.
\end{proof}

\section{Conversion $\URG$ (from Baseline Version)}
\label{s:invconv2}

The original definition of $\URG$ also erroneously introduced dead locations for locations in $\TA$ that were not satisfied by the initial clock assignment.
\begin{example}\label{ex:urgbug}
The timed automaton from Example~\ref{ex:basicta} is translated into the following timed automaton using the original translation $\URG$.
\begin{center}
  \begin{tikzpicture}[>=stealth',draw,node distance=60pt,initial text=]
  \scriptsize
  \node [state,initial left] (l'd) {$\begin{array}{c}l'_d\\\lgfalse{}\end{array}$};
  \node [state,right of=l'd] (l0) {$\begin{array}{c}l_0\\x>1\end{array}$};
  \node [state,right of=l0] (l1) {$\begin{array}{c}l_1\\\lgtrue{}\end{array}$};
  \path[->] (l0) edge node[above] {$a$} (l1);
  
  \end{tikzpicture}
\end{center}
Using the unsatisfied-invariants semantics, the timed transition system underlying this automaton does not allow any transitions from the initial state, whereas the timed automaton from Example~\ref{ex:basicta}, interpreted in the baseline semantics, allows an $a$ transition from the initial state. The translation thus does not preserve the semantics. $\blacksquare$
\end{example}
We now detail the modified construction $\URG$, which converts timed automata from our baseline formalism into automata permitting transitions into states with location invariants, and reprove the associated correctness result.

Specifically, let $\TA = (L, L_0, L_u, \Sigma, \CX, I, E)$ be a timed automaton.  We wish to define another timed automaton, $\URG(\TA) = (L', L'_0, \emptyset, \Sigma, \CX, I', E')$, with an empty set of urgent locations, so that $\TS(\TA)$ and $\TS'(\URG(\TA))$ are isomorphic in an appropriate sense.  $\URG(\TA)$ may be given as follows.
\begin{compactitem}
\item $L' = L$
\item $L'_0 = L_0$
\item $I'(l) =
\begin{cases}
I(l)		& \text{if } l \not\in L_u \\
\lgfalse{} & \text{otherwise.}
\end{cases}
$
\item $E' = \mset{(l, a, \phi \wedge \texttt{resetPred}(I(l'), \lambda), \lambda, l') \ | \ (l, a, \phi, l, \lambda, l') \in E}$
\end{compactitem}

\begin{example}\label{ex:urgfixed}
The timed automaton from Example~\ref{ex:basicta} is not modified by the new translation $\URG$.  We have already observed that the underlying timed transition system is the same for the baseline and the unsatisfied invariants semantics. $\blacksquare$
\end{example}
We now restate and reprove \cite[Theorem 5.6]{fontana-a-menagerie-2014}.

%% Save old section counter, then "trick" latex into generating correct theorem number.
%\newcounter{oldsection}
\setcounter{oldsection}{\value{section}}
\setcounter{section}{5}
\setcounter{theorem}{5}
\begin{theorem}
Let $TA = (L, L_0, L_u, \Sigma, \CX, I, E)$ be a timed automaton.  Then $\TS(\TA)$ and $\TS'(\URG(\TA))$ are isomorphic, \emph{i.e.} $\TS'(\TA) \cong \TS(\INV(\TA))$. 
\end{theorem}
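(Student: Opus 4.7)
Because $\URG$ leaves the location set and initial-location set of $\TA$ unchanged ($L' = L$ and $L'_0 = L_0$), both transition systems share the same underlying state space. Writing $\TS(\TA) = (Q_1, Q_{0,1}, \Delta(\Sigma), \ttrans{}_1)$ and $\TS'(\URG(\TA)) = (Q_2, Q_{0,2}, \Delta(\Sigma), \ttrans{}_2)$, we have $Q_1 = Q_2$ and $Q_{0,1} = Q_{0,2}$. The plan is therefore to take $f$ to be the identity map $f((l,\nu)) = (l,\nu)$; this is immediately a bijection satisfying $f(Q_{0,1}) = Q_{0,2}$, reducing the theorem to showing that $(l,\nu) \ttrans{\alpha}_1 (l',\nu')$ iff $(l,\nu) \ttrans{\alpha}_2 (l',\nu')$ for every $\alpha \in \Delta(\Sigma)$.

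I would split the transition-preservation step on the type of $\alpha$, mirroring the structure of the proof of Theorem 5.5. For the time-delay case $\alpha = \delta \in \Rr^{\geq 0}$, only self-transitions $(l,\nu) \ttrans{\delta} (l, \nu+\delta)$ arise, and two sub-cases occur. If $l \not\in L_u$, then $I'(l) = I(l)$ and both semantics require $\nu + k \models I(l)$ for all $0 \leq k \leq \delta$, so the two conditions coincide. If $l \in L_u$, then in $\TS(\TA)$ the urgency condition blocks any time elapse, while in $\TS'(\URG(\TA))$ the invariant $I'(l) = \lgfalse$ is unsatisfiable and the weakened delay rule also forbids any elapse. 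Thus neither system admits a delay in this case, and the biconditional holds.

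For the action case $\alpha = a \in \Sigma$, edges of $\URG(\TA)$ are in one-to-one correspondence with edges of $\TA$: each $e = (l, a, \phi, \lambda, l') \in E$ gives rise to $e' = (l, a, \phi \wedge \texttt{resetPred}(I(l'), \lambda), \lambda, l') \in E'$. By the baseline semantics, $(l,\nu) \ttrans{a}_1 (l', \nu[\lambda{:=}0])$ holds via $e$ iff $\nu \models \phi$ and $\nu[\lambda{:=}0] \models I(l')$, the latter enforcing the baseline restriction against entering invariant-violating states. By the weakened semantics, $(l,\nu) \ttrans{a}_2 (l', \nu[\lambda{:=}0])$ holds via $e'$ iff $\nu \models \phi \wedge \texttt{resetPred}(I(l'), \lambda)$; invoking Corollary B.5 from the original paper (which characterises $\texttt{resetPred}(I(l'),\lambda)$ as the weakest precondition of $I(l')$ under reset of $\lambda$), this last condition is equivalent to $\nu \models \phi$ and $\nu[\lambda{:=}0] \models I(l')$. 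Hence the two semantics agree exactly on which action transitions are enabled.

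The main obstacle will be making the urgency/$\lgfalse$ equivalence in the delay case airtight against the precise Definition 3.7 of $\ttrans{}$ and its analogue for $\TS'$, ensuring that the two ways of blocking time elapse at urgent locations, namely the explicit urgency side condition in the baseline and the unsatisfiable invariant $I'(l) = \lgfalse$ in the weakened semantics, truly induce the same set of delay transitions. Once this bookkeeping is pinned down, combining the two biconditionals with the fact that $f$ is the identity on $Q_1 = Q_2$ yields the required isomorphism, and no separate reachability argument as in the proof of Theorem 5.5 is needed since the map is already defined on the full state space.
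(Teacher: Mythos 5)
Your proposal is correct and takes essentially the same route as the paper's proof: the identity map on the shared state space, a case split on delay versus action transitions, and an appeal to Corollary B.5 to equate $\nu \models \phi \wedge \texttt{resetPred}(I(l'),\lambda)$ with $\nu \models \phi$ and $\nu[\lambda := 0] \models I(l')$. The only difference is that you make the urgent-location delay subcase (urgency side condition versus $I'(l) = \lgfalse$) explicit, where the paper folds it into a single biconditional chain.
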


%% Now restore old section number
\setcounter{section}{\value{oldsection}}

\begin{proof}[Proof of Theorem 5.6]  

It should first be noted that both transition systems $\TS(\TA)$ and $\TS'(\URG(\TA))$ have the same set of states and initial states.  Unlike the proof of correctness of $\INV$, in this case full isomorphism of the timed transition systems of $\TS(\TA) = (Q,Q_0,\Delta(\Sigma), \ttrans{}_1)$ and $\TS'(\URG(\TA)) = (Q, Q_0, \Delta(\Sigma), \ttrans{}_2)$ can be established. Consider the function $f$:
\begin{align*}
\nmfunc{f&}{Q}{Q}  \\
f&\bigl((l, \nu)\bigr) = (l,\nu)
\end{align*}
or the identity function.  
We must show that that $f$ is an isomorphism from $\TS(\TA)$ to $\TS'(\URG(\TA))$.  That $f$ is a bijection, and that $f(Q_0) = Q_0$, follow from $f$ being the identity function.  It remains to show that $(l, \nu) \ttrans{\alpha}_1 (l',\nu')$ iff $f((l,\nu)) \ttrans{\alpha}_2 f((l',\nu'))$ for all $\alpha \in \Delta(\Sigma)$.  There are two cases to consider.
\begin{compactitem}
\item $(l,\nu)\ttrans{\delta}_1 (l',\nu')$ some $\delta \geq 0$.  This happens iff $l \not\in L_u$, $l = l'$, $\nu' = \nu + \delta$, and for all $k$ such that $0 \leq k \leq \delta$, $\nu + k \models I(l)$, which holds iff $f((l,\nu)) = (l, \nu) \ttrans{\delta}_2 (l',\nu') = f((l',\nu'))$.
\item $(l,\nu)\ttrans{a}_1 (l',\nu')$ some $a \in \Sigma$.  This happens iff there exists an edge $(l, a, \phi, \lambda, l') \in E$ such that $\nu \models \phi$, $\nu' = \nu[\lambda :=0]$, and $\nu' \models I(l')$, which in turn is logically equivalent to asserting the existence of an edge $(l, a, \phi, \lambda, l') \in E$ such that $\nu' = \nu[\lambda :=0]$ and $\nu \models \phi \wedge \texttt{resetPred}(\phi,\lambda)$.  This holds iff there is an edge $(l, a, \phi \wedge \texttt{resetPred}(I(l'), \lambda), \lambda, l') \in E'$ iff (by Corollary B.5) $\nu \models \phi \cap \texttt{resetPred($I(l'), \lambda$)}$, which in turn holds iff $f((l, \nu)) = (l, \nu) \ttrans{a}_2 (l',\nu') = f((l',\nu'))$. 
\qedhere
\end{compactitem}
\end{proof}

\section{Initial Invariant Violations}
% In the baseline version of timed automata considered in the original paper, transitions were not allowed into states whose location invariants would be violated by such a transition.  However, states whose invariants were violated were allowed to engage in action transitions.  Such states were not reachable from initial states, with the following exception:  initial states themselves were allowed to have invariant violations in them.  Specifically, if $l_0 \in L_0$ and $\nu_0 \not\models I(l_0)$, $(l_0,\nu_0)$ was nevertheless allowed to be an initial state.
%
% This decision makes some of the semantic conversions contained in the paper (and this note) easier, but it also may be viewed as being inconsistent with our treatment of invariants in non-initial states.  In particular, one might wish for the following to be true in each reachable state $(l,\nu)$ in a timed automaton $\TA$:  $\nu \models I(l)$.  This can fail to hold for initial states in $\TA$.

To conclude this note we introduce a new translation that shows that for every timed automaton $TA$, there is another timed automaton $\PUR(\TA)$ that is semantically equivalent, and in which each reachable state $(l,\nu)$ satisfies: $\nu \models I(l)$.  Let $\TA = (L, L_0, L_u, \Sigma, \CX, I, E)$, and define $L_B = \mset{l \in L_0 \ | \ \nu_0 \not\models I(l)}$.  Introduce a set $F_u = \mset{l_u \ | \ l \in L_B}$ of fresh locations with the property that $F_u \cap L = \emptyset$ and $l'_u \neq l''_u$ iff $l' \neq l''$.  Now consider $\PUR(\TA) = (L', L_0', L_u', \Sigma, \CX, I', E')$ given as follows.
\begin{compactitem}
\item $L' = L \cup F_u$
\item $L_0' = (L_0 - L_B) \cup F_u$
\item $L_u' = L_u \cup F_u$
\item $I'(l) =
\begin{cases}
I(l)		& \text{if } l \in L\\
\lgtrue{} & \text{otherwise (i.e. $l \in F_u$)}
\end{cases}
$
\item $E' = E \cup \mset{ (l'_u, a, \phi, \lambda, l'') \ | l' \in L_B \wedge (l', a, \phi, \lambda, l'') \in E\ }$
\end{compactitem}
In effect, $\PUR(\TA)$ creates fresh urgent initial locations for pre-existing ones whose invariants are not satisfied by the initial clock assignment, together with copies of edges from the old initial locations to these urgent ones.

\begin{example}\label{ex:pur}
Consider the timed automaton from Example~\ref{ex:basicta}, and observe that $L_B = \{ l_0 \}$, hence $F_u = \{ l_{0,u} \}$.  The automaton is translated into the following timed automaton using $\PUR$.  Location $l_{0,u}$ is urgent.
\begin{center}
  \begin{tikzpicture}[>=stealth',draw,node distance=60pt,initial text=]
  \scriptsize
  \node [state,initial left] (l0u) {$\begin{array}{c}l_{0,u}\\\lgtrue{}\end{array}$};
  \node [state,right of=l0u] (l1) {$\begin{array}{c}l_1\\\lgtrue{}\end{array}$};
  \node [state,right of=l0] (l0) {$\begin{array}{c}l_0\\x>1\end{array}$};
  
  \path[->] (l0u) edge node[above] {$a$} (l1)
            (l0) edge node[above] {$a$} (l1);
  
  \end{tikzpicture}
\end{center}
Observe that all \emph{reachable states} satisfy their invariant. $\blacksquare$
\end{example}
The following may now be proved about this translation.

\begin{theorem}
Let $TA$ be a timed automaton.  Then the following are true.
\begin{compactenum}
\item $\TS(\TA) \cong_r TS(\PUR(\TA))$.
\item Let $I'$ be the invariant mapping in $\TS(\PUR(\TA))$. Then for every $(l, \nu) \in R(\TS(\PUR(\TA))), \nu \models I'(l)$.
\end{compactenum}
\end{theorem}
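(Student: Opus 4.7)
The plan is to define an isomorphism $f$ between the reachable state spaces of $\TS(\TA)$ and $\TS(\PUR(\TA))$, and to establish part 2 by a direct induction afterwards. The candidate map is
$$
f((l,\nu)) = \begin{cases} (l_u, \nu) & \text{if } l \in L_B \text{ and } \nu \not\models I(l) \\ (l, \nu) & \text{otherwise.} \end{cases}
$$
The observation that justifies this case split is that in the baseline semantics the only reachable state of $\TS(\TA)$ of the form $(l,\nu)$ with $l \in L_B$ and $\nu \not\models I(l)$ is the initial state $(l, \nu_0)$ itself; every other reachable state in location $l$ must have been entered via a transition, which the baseline relation forbids unless the target invariant holds.

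For part 1, I would first verify that $f$ is one-to-one and maps into $\TS(\PUR(\TA))$, following the pattern used in the proof of Theorem 5.5. To see $f(Q_0) = Q_0'$, note that for $l \in L_0 - L_B$ the identity assignment suffices, whereas an initial state $(l, \nu_0)$ with $l \in L_B$ is mapped to $(l_u, \nu_0)$, which is initial in $\PUR(\TA)$ because $l_u \in F_u \subseteq L_0'$. For transition preservation I would split on the transition label. Time transitions arise only from identity-image states $(l,\nu)$ with $l \in L$, where $I'(l) = I(l)$ makes the delay conditions identical; from an image state $(l_u, \nu_0)$ no delay is possible, matching the fact that $\nu_0 \not\models I(l)$ blocks any delay from $(l, \nu_0)$ in $\TS(\TA)$ and that urgency of $l_u$ blocks any delay in $\TS(\PUR(\TA))$. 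For action transitions, any edge $(l,a,\phi,\lambda,l'') \in E$ lifts verbatim to $E'$, handling the case where $f$ is the identity; an action transition out of a bad initial state $(l,\nu_0)$ uses some $(l,a,\phi,\lambda,l'') \in E$ and is matched by the companion edge $(l_u, a, \phi, \lambda, l'') \in E'$, whose target $(l'', \nu_0[\lambda := 0])$ is again an identity image since entry requires $\nu_0[\lambda := 0] \models I(l'')$ and $l'' \in L$. The converse direction is symmetric, exploiting that edges leaving $l_u$ in $E'$ are exactly the copies of edges leaving $l$ in $E$.

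Part 2 then follows by induction on the inductive definition of the reachable set. The base case handles the two families of initial states: $(l,\nu_0)$ with $l \in L_0 - L_B$ satisfies $\nu_0 \models I(l) = I'(l)$ by the very definition of $L_B$, and $(l_u, \nu_0)$ with $l_u \in F_u$ satisfies $I'(l_u) = \lgtrue{}$ trivially. The inductive step is immediate: a non-initial reachable state is reached either by a delay, which preserves the invariant as part of the delay's side condition, or by an action transition, whose definition in the baseline semantics forbids entering a location with a violated invariant. The main obstacle I expect is the bookkeeping in part 1 around the single non-identity case: one must verify both that no reachable non-initial state of $\TS(\TA)$ is accidentally mapped into an urgent copy, and that the only reachable copy $(l_u, \nu)$ of $\TS(\PUR(\TA))$ has $\nu = \nu_0$, so that the bijection and the transition correspondence both close up cleanly.
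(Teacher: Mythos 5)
Your proposal is correct and follows essentially the same route as the paper's proof: the same isomorphism (your case condition ``$l \in L_B$ and $\nu \not\models I(l)$'' coincides on reachable states with the paper's ``$\nu = \nu_0$ and $\nu_0 \not\models I(l)$''), the same case split on whether the source state is an identity image or an urgent-copy image, and the same observation for Part 2 that initial states satisfy their invariants by construction while the baseline transition relation preserves this. No substantive differences to report.
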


\begin{proof}
Let $\TA = (L, L_0, L_u, \Sigma, \CX, I, E)$ and $\PUR(\TA) = (L', L'_0, L'_u, \Sigma, \CX,$ $I', E')$.  Also let $\TS(\TA) = (Q, Q_0, \Delta(\Sigma), \ttrans{})$ and $\TS(\PUR(\TA)) = (Q', Q'_0,$ $\Delta(\Sigma), \ttrans{}')$.  For Part 2 of the proof, it suffices to note that for every $(l'_0, \nu_0) \in Q'_0$, $\nu_0 \models I'(l'_0)$ by construction.  This, plus the fact that the definition of $\TS$ guarantees that if $(l,\nu) \ttrans{\alpha}' (l',\nu')$ then $\nu' \models I'(l')$, provides the desired result.

Now consider Part 1; we must devise an isomorphism $f$ from $R(\TS(\TA))$ to $R(\TS(\PUR(\TA)))$.  Define $\nmfunc{f}{Q}{Q'}$ as follows.
$$
f((l,\nu)) =
\begin{cases}
(l_u, \nu)	& \text{if } \nu = \nu_0 \text{ and } \nu_0 \not\models I(l) \text{ (i.e.\/ $l_u \in F_u$)}\\
(l, \nu)		& \text{otherwise.}
\end{cases}
$$
Following the proof given above of Theorem 5.5, to establish that $f$ is the desired isomorphism it suffices to argue that $f$ is one-to-one, that $f(Q_0) = Q'_0$, and that $(l, \nu) \ttrans{\alpha} (l',\nu')$ iff $f((l,\nu)) \ttrans{\alpha}' f((l',\nu'))$.  The first two of these follow immediately from the definitions of $f$ and set $L_0'$ of initial locations in $\PUR(\TA)$.

To show that $(l, \nu) \ttrans{\alpha} (l',\nu')$ iff $f((l,\nu)) \ttrans{\alpha}' f((l',\nu'))$, we consider two cases.  Suppose first that $f((l,\nu)) = (l, \nu)$.  In this case it must hold that $\nu' \models I(l')$, meaning that $f((l',\nu')) = (l',\nu')$, and the construction of $\PUR(\TA)$ guarantees that $f((l,\nu)) = (l,\nu) \ttrans{\alpha}' (l',\nu') = f((l',\nu'))$ iff $(l,\nu) \ttrans{\alpha} (l',\nu')$.

Now suppose that $f((l,\nu)) \neq (l,\nu)$.  This can only happen if $\nu = \nu_0$ and $\nu_0 \not\models I(l)$; in this case, $f((l,\nu)) = (l_u, \nu_0)$.  From the definition of $TS$, $(l, \nu) \ttrans{\alpha} (l,\nu')$ iff $\alpha \in \Sigma$ and there exists $(l,\alpha,\phi,\lambda,l') \in E$ with $\nu \models \phi$, $\nu' = \nu[\lambda := 0]$, and $\nu' \models I(l')$; in particular, $\alpha$ cannot be a delay event $\delta$.  Given the definition of $\PUR(\TA)$, this can happen iff there is an edge $(l_u,\alpha,\phi,\lambda,l') \in E'$, which in turn can hold iff $f((l,\nu)) = (l_u,\nu) \ttrans{\alpha}' (l',\nu') = f((l',\nu'))$ (the last equality follows from the fact that for any state $(l',\nu')$ with an incoming transition, $\nu' \models I(l')$ and hence $f((l',\nu')) = (l', \nu')$).
\end{proof}

\bibliographystyle{alpha}
\bibliography{CSURPfRc}

\end{document}